\newcommand{\Vnti}{\mathbf{V}}
\newcommand{\Del}{\boldsymbol{\Delta}}
\newcommand{\DelT}{\Del'}
\newcommand{\J}{(\DelT \Vnti \Del)}
\DeclareMathOperator{\vech}{vech}
\newcommand{\IPC}{\mathbf{t}}
\newcommand{\var}{\mathrm{var}}
\newcommand{\EPC}{\textrm{EPC}}
\newcommand{\pvec}{\boldsymbol{\theta}}
\newcommand{\y}{\mathbf{y}}
\newcommand{\z}{\mathbf{z}}
\renewcommand{\d}{\mathbf{d}}
\newcommand{\W}{\mathbf{W}}
\newcommand{\proglang}{\texttt}
\title{Individual Differences in Structural Equation Model Parameters}
\author{DL Oberski}
\date{}
\begin{document}
\maketitle

\begin{abstract}
Individuals may differ in their parameter values.  This article discusses a three-step method of studying such differences by calculating and then modeling ``individual parameter contributions'', making the study of heterogeneity in arbitrary structural equation model parameters as technically challenging as performing linear regression. The proposed approach allows for a contribution to the study of differential error variances in survey methodology that would have been difficult to make otherwise.
\end{abstract}

\section{Introduction}

Structural equation models assume that the same parameter values apply to all individuals. This assumption is probably false. Such ``aggregated''  analyses \citep{muthen_complex_1995} may still provide useful models  in some cases, particularly when individual differences in parameter values are not of substantive interest. In other cases,  interest focuses precisely on individual differences in, for instance, latent variable means, regressions of one set of latent variables on another, or latent or measurement error variances. Of these, it is trivially simple to compare latent variable means across values of another variable: a  regression of the latent variable on an independent variable will do. Most structural equation models involve at least one such regression \citep[e.g.][]{maccallum2000applications}. Multiple independent variables' effects on the latent mean can be explored just as simply. Less simple, however,  is the comparison of other types of parameters such as latent regression coefficients, variances, and measurement error variances.  Such comparisons are, however, often also of scientific interest.

%For example, \citet{davidov2008values} regress citizens' latent attitudes towards immigration on their reported basic values across 19 European countries. Estimating a 19-group structural equation model with 323 cross-group equality restrictions, they estimate different regression coefficients of interest in each country, and go on to correlate the obtained country regression coefficient estimates with the country-level covariates ``gross domestic product'' and ``inflow of immigrants''. Not only countries, but also regions and indeed people have differing incomes, so that beyond this already advanced analysis there is reason to think that an exploration of differences in the latent variable regression coefficients with respect to other covariates may also have been of interest. This would have complicated the analysis non-trivially, however.

%In a different
Examples of applications that focus on differences in structural parameters include \citet{davidov2008values}'s comparison of the regression coefficients of latent attitudes to immigration on latent values across 19 European countries, and \citet{kendler2000illicit}'s study of heritability of marijuana use in twins, which compared variance estimates of additive genetic factors between ``use'' and ``heavy use'' groups. 
Individual differences in reliability and validity of survey questions are another example. \citet{revilla2012impact}  split the sample into groups on one  covariate at a time and estimated multitrait-multimethod models in each group. \citet{scherpenzeel_validity_1997} used the same approach, as  did \citet[pp. 168--172]{alwin_reliability_1991} and \citet{alwin_margins_2007}.
Instead of looking at the impact of  age, gender, and education controlling for the other factors, only the marginal impact of each of these factors could be assessed \citep[p. 52]{revilla2012impact}. While it is in principle possible to split the sample by the interaction of these three covariates, this would have resulted in groups far too small for estimating the multitrait-multimethod parameters of interest. Moreover, to allow for multiple-group structural equation modeling, the ``age'' variable required  categorization. 

Solutions to these complications exist. However, relating possibly latent variances, covariances, and regression coefficients to explanatory variables is  not currently as simple an exercise as regressing an outcome on a set of predictors. 
As demonstrated in the examples, multiple group structural equation modeling \citep{sorbom1974general,bollen_structural_1989,raykov2006first}, though flexible and entirely general, can lead to a large number of groups, parameters, and equality restrictions. Interaction modeling is a possible alternative when differences in regression coefficients are of interest \citep{kenny1984estimating,bollen1995structural,jaccard1996lisrel,joreskog1996nonlinear,schumacker1998interaction,preacher2006computational,van2013testing}. Interaction modeling has the drawback that it is not well-suited to studying differences in variances; and  with more than one predictor, the dimensions of integration \citep{klein2000maximum} or selection of appropriate product indicators or  third-order moments \citep{mooijaart2010alternative} is an additional complication.
Another possible avenue is multilevel structural equation modeling \citep{muthen1989latent,muthen1994multilevel,muthen_complex_1995} with random slopes, which has become more convenient with recent software advances \citep{muthen2012bayesian}. This method is flexible but again focuses on slopes rather than variances and has the drawback that the random effects models require a normality assumption. Random slopes models, moreover, require numerical integration,  precluding problems with many varying coefficients and a large number of observations.  
Currently available methods for looking at differences in structural equation model parameters are therefore well-developed, but not highly convenient for exploration with several candidate covariates, parameter differences, and big data sets.

This article proposes a three-step procedure that considerably simplifies relating structural equation model parameters to covariates. In step one, a single-group structural equation model is fitted. In step two, based on the step one estimates, the observed data are transformed into a new data set of ``individual parameter contributions'' (IPC's). These individual parameter contributions can be read in by any basic statistics program and regressed on any set of covariates in step three. 

The proposed three-step IPC regression procedure is closely related to the estimation of modification indices and expected parameter changes \citep{saris_detection_1987}.
We show that a chi-square test for differences in the individual parameter contributions with respect to a grouping variable is equivalent to the modification index that would have been obtained in a hypothetical multiple group model with equality restrictions. However, the researcher requires only the ``aggregated'' analysis to obtain the individual parameter contributions. Exploring individual differences in arbitrary structural equation model parameters thus becomes as simple as running a multiple linear regression or $t$-test on the individual parameter contributions.

Testing of IPC regression coefficients is also closely related to the econometric literature on ``structural change''
\citep{kuan1995generalized,hjort2002tests}. ``Structural change'' tests  were originally motivated by a search for changes over time in times series model parameters \citep{brown1975techniques,nyblom1989testing,andrews1993tests}. 
\citet{zeileis2005unified} and \citet{zeileis2007generalized} generalized this econometric technique to maximum likelihood estimation and \citet{merkle2013tests} discussed its application to the investigation of measurement invariance. The test obtained from IPC regression regression is proportional to a subset of the structural change tests that \citet[p. 123]{hjort2002tests} show to have optimal asymptotic power. Whereas the structural change approach places emphasis on hypothesis tests, however, IPC regression also estimates the differences' magnitude. In other words, when between-group differences are investigated, IPC regression provides not only the modification index (Lagrange multiplier, score test), but also the expected parameter change. Moreover, 
IPC regression allows for modeling of the parameter differences in the third step: for instance, differences with respect to several variables at a time may be evaluated, as the application shows.

A field of application where IPC regression can be useful is survey methodology: survey researchers such as  \citet{alwin_reliability_1991}, \citet{scherpenzeel1995question}, \citet{alwin_margins_2007}, and \citet{revilla2012impact} have studied how survey question reliability varies with respondent characteristics. Past research estimated reliability, or measurement error variance, with a multiple group structural equation model and compared the result across groups, a method that does not permit studying the effects of continuous respondent characteristics or the simultaneous evaluation of more than one respondent characteristic at a time. We contribute to this field by showing how IPC regression can be applied to regress measurement error variance in a ``quasi-simplex'' model of survey error on respondent characteristics. Effects of Age, Gender, and Education level on the measurement error variance in answers to the frequency question ``how often do you use the internet?'' are found, but these effects dissappear when controlling for a variable indicating that the respondent never uses the internet.  We discuss some possible implications for the analysis of frequency questions in surveys.

\vspace{12pt}The remainder of this article is structured as follows.
The data transformation yielding individual parameter contributions from an estimated single-group model is derived in the following section. As is the case for expected parameter changes and modification indices, regressing individual parameter contributions on covariates yields only approximately unbiased estimates. The subsequent  simulation study therefore not only evaluates the performance of the suggested method, but also compares performance to that of the commonly applied multiple group structural equation model. The application is then discussed: after fitting a four-wave quasi-simplex model to a longitudinal probability sample of respondents, we assess individual differences in the measurement error variance of answers to a survey question. The final section discusses the results and concludes.

\section{Regressing SEM estimates on covariates}\label{sec:theory}

Intuitively, structural equation model parameters are estimated by combining the observations. Although the way in which each observation contributes to a parameter estimate is not usually linear, it is possible to obtain a linear approximation \citep{bentler1984efficient} to each observation's contribution. Sampling fluctuations will mean the contributions differ over observations, but differences between parameter contributions may also be due to true (``structural'') differences in the parameters between observations. This section  formally motivates individual parameter contributions and  shows that performing linear regression of the individual parameter contributions on a covariate is equivalent to regressing the SEM parameter directly on the covariate, if it were possible to formulate such a model. The usual regression coefficient hypothesis tests are valid tests of the relationship between a covariate and a SEM parameter. Since the independent variable in the linear regression may be a set of dummy variables, the results also apply to comparisons between groups. In that case, as the appendix shows, IPC regression is equivalent to calculating versions \citep{bentler1992some} of the more familiar Expected Parameter Change and Modification Index statistics for differences between groups in SEM parameters.

Given a vector $\y$  of $p$ observed variables with population covariance matrix $\boldsymbol{\Sigma}$, a
structural equation model can be viewed as a covariance structure model 
$\boldsymbol{\Sigma} = \boldsymbol{\Sigma}(\pvec)$, where $\boldsymbol{\Sigma}(\pvec)$ is a continuously
differentiable matrix-valued (symmetric and positive definite) function of the
vector of parameters $\pvec$ of the model  
\citep[see, e.g.][for more details]{bollen_structural_1989}.

Given an observed covariance matrix $\mathbf{S}$, based on a sample of size $n$ of $\y$,
the vector of parameters $\pvec$ is estimated by minimizing with respect to
$\pvec$ a discrepancy function $F(\pvec) = F(\mathbf{S}, \boldsymbol{\Sigma}(\pvec))$ of $\mathbf{S}$ and
$\boldsymbol{\Sigma}(\pvec)$. 
This approach encompasses maximum likelihood as well as weighted least squares
estimators.

A key matrix is the hessian matrix $\mathbf{V}
= \frac{1}{2} \frac{\partial^2 F(\mathbf{S}, \boldsymbol{\Sigma})}{\partial \boldsymbol{\sigma} \partial \boldsymbol{\sigma}'}$, where
$\boldsymbol{\sigma} := \vech(\boldsymbol{\Sigma})$ is the half vectorization of $\boldsymbol{\Sigma}$. In weighted least squares estimators, 
$\mathbf{V}$ is a weight matrix determined by the choice of estimator
\citep{satorra1989alternative}. To study how parameter values vary as a function of observations, we also define $n$ vectors of size $p (p + 1)/2$ 
\begin{equation}
	\d_i := \vech\left[ (\y_i - \bar{\y}) (\y_i - \bar{\y})'\right],
	\label{eq:di}
\end{equation}
\citep[e.g.][]{satorra_asymptotic_1992}, so that $\bar{\d} = \mathbf{s}$. 
In the case of complex sample data, $\bar{\y}$ should be replaced with a consistent (weighted) estimate of the population mean \citep[p. 283]{muthen_complex_1995}.

Of interest are differences in the parameter values across values of a vector of covariates $\z$: it may be that the parameter vector $\pvec_i$ varies per observation $i$ due to differences in $\z_i$. Hypothetically, if each $\pvec_i$ could be directly observed, studying differences would simply amount to a regression such as the linear regression $\pvec_i = \boldsymbol{\gamma} \z_i$. For instance, if $\z$ is a dummy-coded grouping variable, differences across groups in the structural equation model parameters would be reflected in the regression coefficient vector $\boldsymbol{\gamma}$. If observations $\hat{\pvec}_i$ containing sampling error were obtained, sample estimates of the regression model could be obtained by performing, for instance, a linear regression 
\begin{equation}
	\hat{\pvec}_i = \boldsymbol{\gamma} \z_i + \boldsymbol{\delta}_i.
	\label{eq:regression}
\end{equation}
This would yield sample estimates 
$\hat{\boldsymbol{\gamma}} = (\z\z')^{-1}\z \hat{\boldsymbol{\Theta}}$, where $\boldsymbol{\hat{\Theta}} := (\hat{\pvec}_1, ..., \hat{\pvec}_n)'$.

In practice 
$\boldsymbol{\hat{\Theta}}$ is not observed, but it is still possible to approximate 
$\hat{\boldsymbol{\gamma}}$ by creating a transformed data matrix and then performing linear regression on $\z$. 
From the above, it is clear that 
$$
\hat{\pvec} = f(\d, \z).
$$
And from Equation \ref{eq:regression} and by applying the implicit function theorem, 
$$
	\boldsymbol{\gamma} = \frac{\partial \hat{\pvec}} {\partial \z} = 
		- \left(\frac{\partial^2 F(\mathbf{S}, \boldsymbol{\Sigma}(\pvec)}{\partial \pvec\partial\pvec'}\right)^{-1} \left(\frac{\partial F(\mathbf{S}, \boldsymbol{\Sigma}(\pvec)}{\partial \pvec \partial \z'}\right).
$$
From \citet{satorra1989alternative}, it can be seen that the asymptotic limits of 
$\frac{\partial^2 F(\mathbf{S}, \boldsymbol{\Sigma}(\pvec)}{\partial \pvec\partial\pvec'}$
and $\frac{\partial F(\mathbf{S}, \boldsymbol{\Sigma}(\pvec)}{\partial \pvec \partial \z'}$ are
$\Del' \mathbf{V} \Del$ and $\Del' \mathbf{V} (\partial \d / \partial \z')$ respectively,
letting $\Del := \partial \boldsymbol{\sigma} / \partial \pvec$, which can be obtained
from the parameter values using the expressions given by \citet{neudecker1991linear}. 
Given model (\ref{eq:regression}), $\partial \d / \partial \z'$ is the coefficient  in a linear regression. 

The effects of $\z$ on parameter values $\pvec$ can therefore be obtained as 
\begin{equation}
\boldsymbol{\hat{\gamma}} =	(\z \z')^{-1} \z \cdot \W\d ,
	\label{eq:solution}
\end{equation}
where the transformation matrix $\W := -\J^{-1} \Del' \mathbf{V}$.
In a given sample, an approximately consistent estimate $\hat{\W}$ of $\W$ can be obtained 
by replacing the model parameters that determine $\Del$ and $\mathbf{V}$ by their sample estimates.
In other words, after transforming the observed data as
\begin{equation}
	\hat{\IPC} := \hat{\W} \cdot \d,   \label{eq:IPC-def}
\end{equation}
the $\boldsymbol{\gamma}$ coefficient vector can be estimated simply by
regressing $\hat{\IPC}$ on $\z$, or by any other linear model for $\hat{\IPC}$ as
a function of $\z$. Since the $\hat{\IPC}$ values represent each observation's 
contribution to the linearized sample parameter estimate, we will refer to 
$\hat{\IPC}$ as ``individual parameter estimate contributions'' (IPC's).

The variance estimate of $\hat{\boldsymbol{\gamma}}$ obtained in this way
will be consistent as well, and ``Wald'' or $t$-tests will be valid. Moreover,
standard errors and test statistics obtained from the regression of $\hat{\IPC}$ on $\z$ will be robust to departures of $\y$ from normality. This
can easily be seen  by recognizing $\hat{\var}(\hat{\pvec}) = n^{-1} (\hat{\IPC} - \bar{\hat{\IPC}})'(\hat{\IPC} - \bar{\hat{\IPC}})$ as the robust parameter estimate
covariance matrix \citep{fuller1987, satorra1994corrections}. 
Therefore, the method proposed here is distribution-free.
Complex sampling designs can be taken into account by taking account of clustering and stratification in the variance calculations for the linear regression of $\hat{\IPC}$ on $\z$, something all standard statistical packages allow the user to do. 

In the special case where $\z$ is a  dummy variable for a grouping of interest,
a sample ``Wald'' test of $H_0: \hat{\gamma} = 0$ is shown in Appendix \ref{sec:EPC} to be equivalent to calculating a ``generalized'' modification index in a multiple-group structural equation model  \citep[section 5]{saris_detection_1987,satorra1989alternative}. Consider the situation where a multiple-group structural equation model is formulated, using $\z$ as the grouping variable, 
with cross-group equality constraints on all parameters including the parameter of interest. A ``modification index'' is then calculated for the statistical significance of the change in the parameter of interest when freeing all parameter equality constraints \citep{bentler1992some}. This modification index will equal the ``Wald'' test for $H_0: \hat{\gamma} = 0$ in the simple regression of $\hat{\IPC}$ on $\z$. In addition,   
$\hat{\gamma}$ will equal the corresponding ``expected parameter change'' (depending on the parameterization).

Just as for modification indices and expected parameter change coefficients, in deriving
approximate consistency of $\hat{\boldsymbol{\gamma}}$ the assumption is made that the second derivative matrix
$\mathbf{J} := \Del' \mathbf{V} \Del$ is approximately constant between the null and the true model \citep{saris_detection_1987}. 
In the case discussed here, the relevant assumption is that $E_\z [ \mathbf{J}(\hat{\pvec} | \z)] \approx \mathbf{J}$. This assumption will not hold in general, since
the average conditional parameter estimate is not generally equal to the parameter estimate
for a completely pooled model \citep{muthen1989latent}. However, it will hold approximately for parameters depending on the covariance matrix such as latent regression coefficients and variances when the sample covariance matrix is conditioned on $\z$, i.e. when $\z$ is included in the model as a fixed covariate with direct effects on all the observed variables.

\vspace{12pt}
In summary, possible differences in structural equation model parameter estimates  
can be investigated by modeling a transformed data set $\hat{\IPC}$ as a linear function of $\z$, for instance by OLS linear regression or a $t$-test
for differences between groups. 
This suggests a three-step approach to exploring differences in any set of parameters of a structural equation model with respect to a set of covariates:
\begin{enumerate}
	\item Estimate the overall single-group model (preferably introducing $\z$ as fixed covariates);
	\item From the data and step 1, obtain the transformed data set $\hat{\IPC}$;
	\item Regress the individual parameter contributions $\hat{\IPC}$ for the parameter(s) of interest on the covariates.
\end{enumerate}

Appendix \ref{sec:Rcode} provides an \proglang{R} function (\texttt{get.ipc}) that generates the transformed IPC data set ($\hat{\IPC}$), given a data set and a structural equation model fitted with the \texttt{lavaan} package \citep{Rlanguage,rosseel2013lavaan}. The transformed data can then be written to a file and analyzed by basic statistics software or analyzed directly from within \proglang{R}.
An example analysis using these functions is given in Appendix \ref{sec:Rcode-example}.

\section{Simulation study\label{sec:simulation}}

To evaluate the finite-sample performance of the IPC regression proposed here, we set up a small simulation study. The study has two goals: 
\begin{enumerate}
\item To evaluate the overall performance of the method in terms of bias, coverage, type I error, and power in a simple two-group setting; 
\item To compare performance of the method of taking IPC mean differences with that of the currently most applied, and asymptotically most efficient, method: multiple group structural equation modeling.
\end{enumerate}

\begin{figure}
\centering
	\includegraphics[width=.55\textwidth]{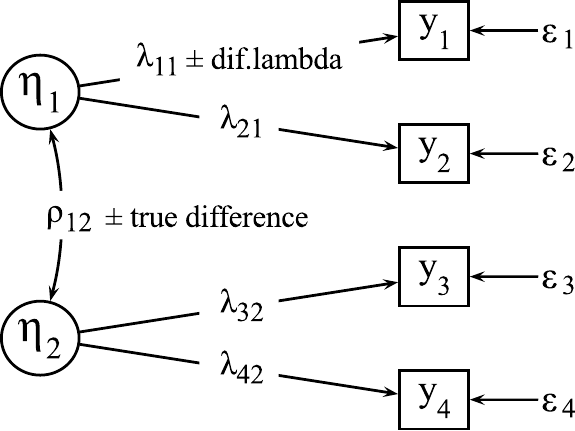}

	\caption{Simulation model. Of interest is the true between-group difference in the factor correlation, $\rho_{12}$. Depending on the condition, there is also a difference of  ``dif.lambda'' in the size of the first loading. Between-group differences in the intercepts of $y_1$ -- $y_4$ are always present (not shown for clarity).}\label{fig:simulation-model}
\end{figure}

Figure \ref{fig:simulation-model} shows the two-factor model (1 degree of freedom) used for our Monte Carlo simulations.  A two-group population is generated, with all loadings equal to $1$ in group 1, and $\lambda_{21} = \lambda_{42} = 1, \lambda_{32} = 1/2$, and $\lambda_{11} = 1 + \textit{dif.lambda}$ in group 2. All error variances in both groups equal $\var(\epsilon_{j}) = 0.8$. The intercepts of all observed variables equal $1$ in group 1 and $2$ in group 2. The latent variables $\eta_1$ and $\eta_2$ are standardized in both groups and their correlation is set to $0.5$ in group 1 and to $0.5 + \textit{true difference}$ in group 2. We assume that interest focuses on this true difference in latent correlation between groups.

Multiple group structural equation modeling is compared with IPC regression on a group indicator as described above. 
The two procedures are evaluated under conditions determined by fully crossing the following simulation experimental factors:

\begin{itemize}
	\item The true difference in between-group correlation: \{-0.4, -0.2, -0.1, 0, +0.1, 0.2, 0.4\}. Note that the zero condition corresponds to no effect;
	\item The difference in loadings between groups: \{0, 0.1, 0.2\};
	\item The sample size per group: $n_g = $ \{125, 250, 500, 1000\}. The total sample size therefore equaled $n = 2 n_g = $ \{250, 500, 1000, 2000\}. 
\end{itemize}

For each of the $4 \times 7 \times  3 = 84$ resulting conditions, $n$ observations were drawn from a multivariate normal distribution with mean vector and covariance matrix depending on the group. The two procedures, IPC differences and MG-SEM, were then applied to the sample data, and this process was replicated 1000 times for each condition. 

\subsection{Results}

Figure \ref{fig:sim-bias-ipc} plots the true difference in latent correlations in each condition against the difference estimated using the differences in individual parameter contributions. For easy reference, the black 45-degree diagonal line corresponds to exact equality of the true and estimated difference. Each point corresponds to a particular simulation condition, connected with lines. The lines lie very close to the ideal of exact equality. When the true difference is strongly positive or negative, it is slightly underestimated in absolute terms. That is, the difference estimate is slightly biased towards zero for very large between group differences in the latent correlation.  The differences between the various conditions obtained from crossing sample size with loading differences are hardly discernible. This implies that the IPC difference procedure provides close to unbiased estimates under all of these conditions.
\begin{figure}\centering
	\includegraphics[width=.8\textwidth]{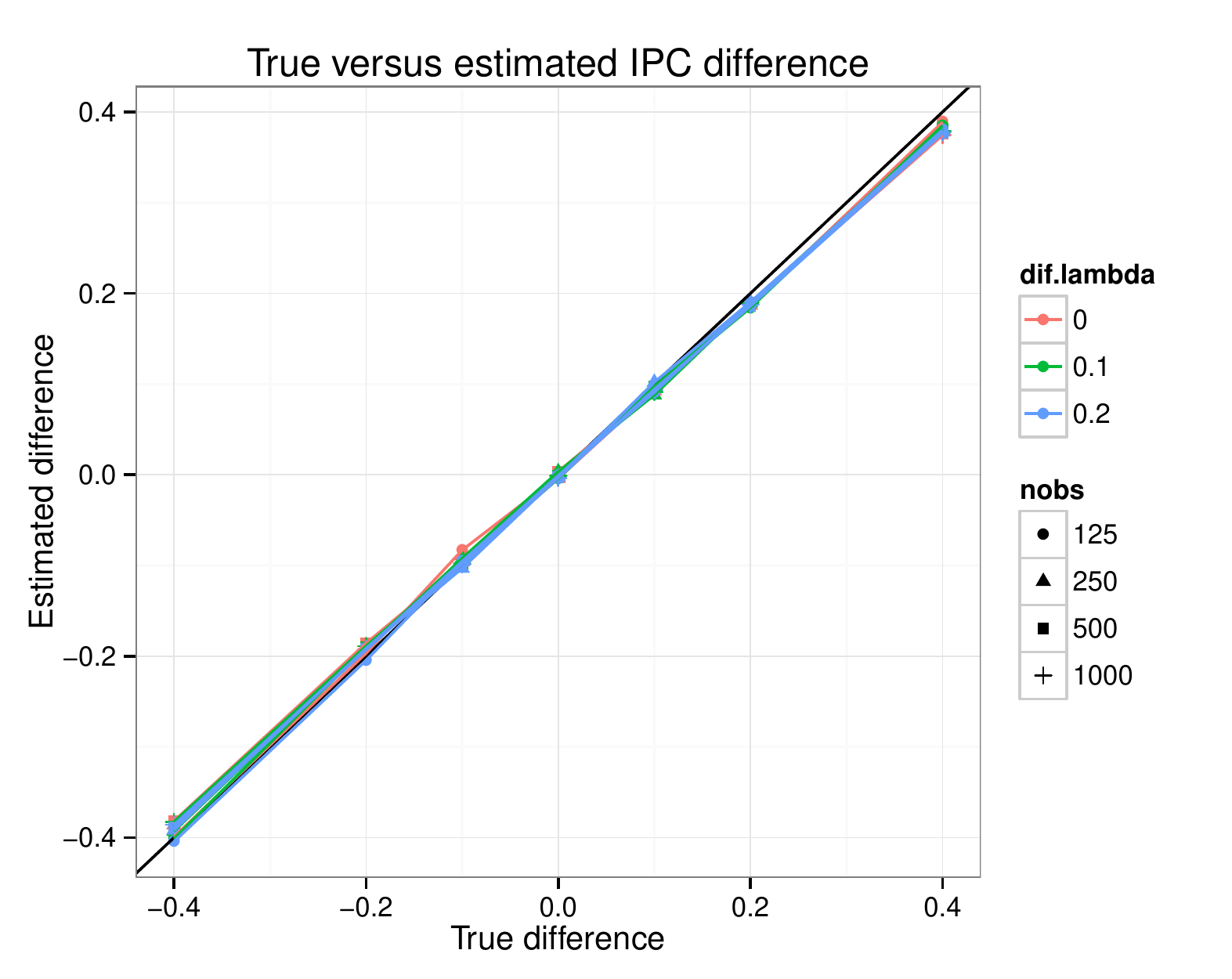}
	\caption{Bias in the mean difference between groups of IPC for the latent correlation. 
		Different conditions are shown as differences in color 
	(loading non-invariance) and point shape (number of sample observations).}
		\label{fig:sim-bias-ipc}
\end{figure}

Figure \ref{fig:sim-bias-mgsem} shows plots the same quantities as Figure \ref{fig:sim-bias-ipc} when using multiple group structural equation modeling (MG-SEM). The right-hand side displays the results obtained when estimating a correctly specified multiple group structural equation model. In this simulation setup, that implies there should be no misspecified equality restrictions across groups in the multiple group model. As clearly seen in Figure \ref{fig:sim-bias-mgsem}, the correctly specified multiple group model provides unbiased estimates under all conditions, including those with large differences in latent correlation. The correctly specified MG-SEM procedure is therefore  superior to the IPC difference procedure, which showed a slight bias towards zero in the extreme conditions.

The left-hand side of Figure \ref{fig:sim-bias-mgsem} demonstrates the bias that occurs when the covariance structure parameters are erroneously constrained to equality across the groups (the distance from the diagonal to the colored lines). This bias can be considerable. Moreover, when the true difference is zero, the misspecified equality restrictions yield a positive latent correlation difference estimate. In other words, the researcher will, on average, find a difference in latent correlations when there is none; at the same time, when there is a true difference in latent correlations, this difference may be severely underestimated. The difference between Figure \ref{fig:sim-bias-ipc} and the left-hand graph of Figure is striking: after all, the IPC difference procedure is based only on the results of a one-group pooled model, which  implicitly also restricts all covariance structure parameters to equality over groups. Figure \ref{fig:sim-bias-mgsem} shows that misspecifications in between-group equality restrictions are less influential in the IPC procedure than in multiple group modeling.

\begin{figure}\centering
	\includegraphics[width=\textwidth]{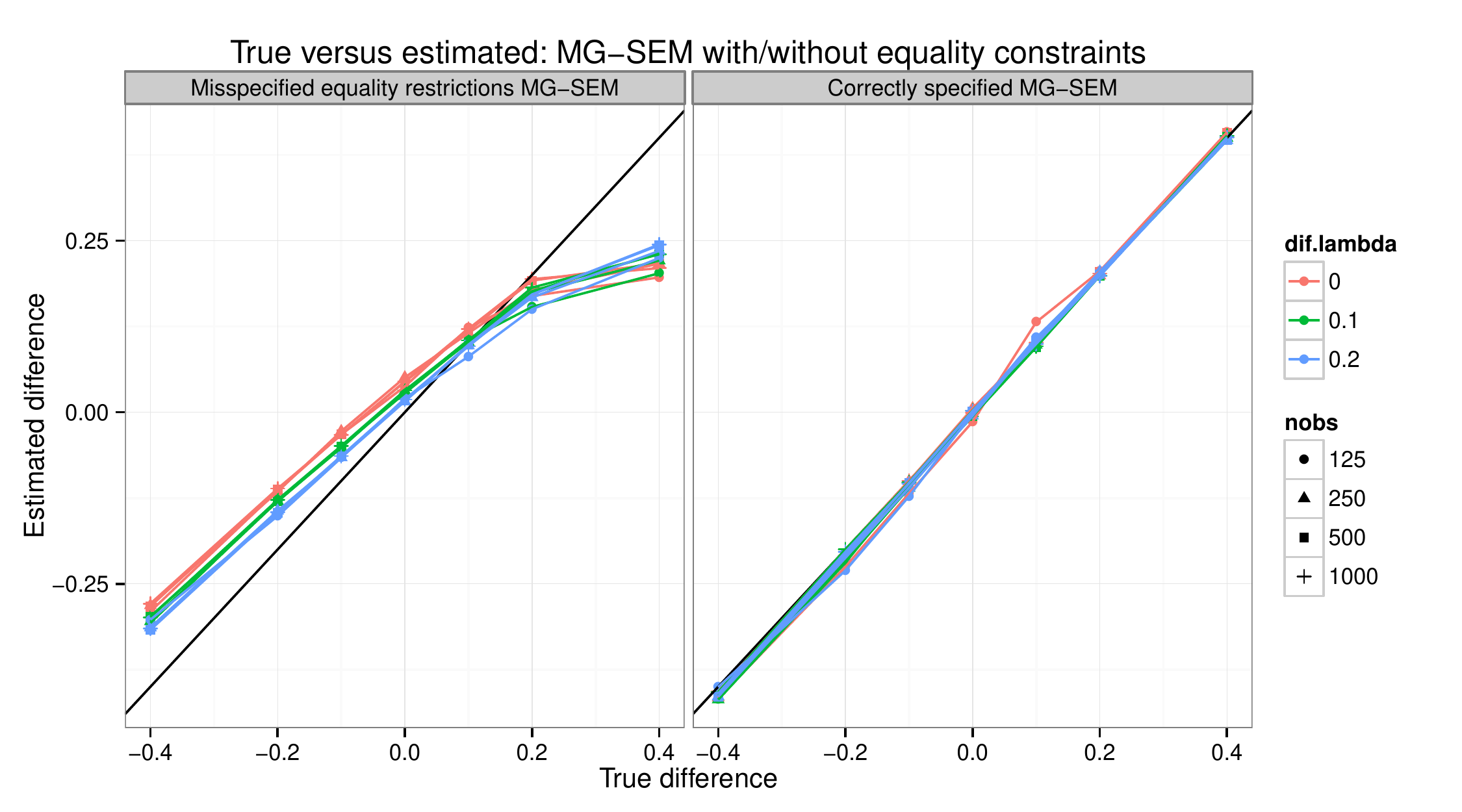}
	\caption{Bias in latent correlation differences obtained from 
	multiple group SEM with misspecified equality constraints (left) and correctly specified MG-SEM (right). 
}
		\label{fig:sim-bias-mgsem}
\end{figure}

One may also wonder whether, when we simply calculate the difference in IPC scores between groups, the usual standard error of this difference is valid and 95\% confidence  intervals so calculated actually cover the true difference in 95\% of the replications.
Figure \ref{fig:sim-coverage} shows that this is indeed the case. The dotted reference line indicates the ideal, 95\% coverage. Under all conditions, the confidence interval for the difference in IPC's produced by standard software covers the true difference in latent correlations about 95\% of the time. 

\begin{figure}
\centering
	\includegraphics[width=.8\textwidth]{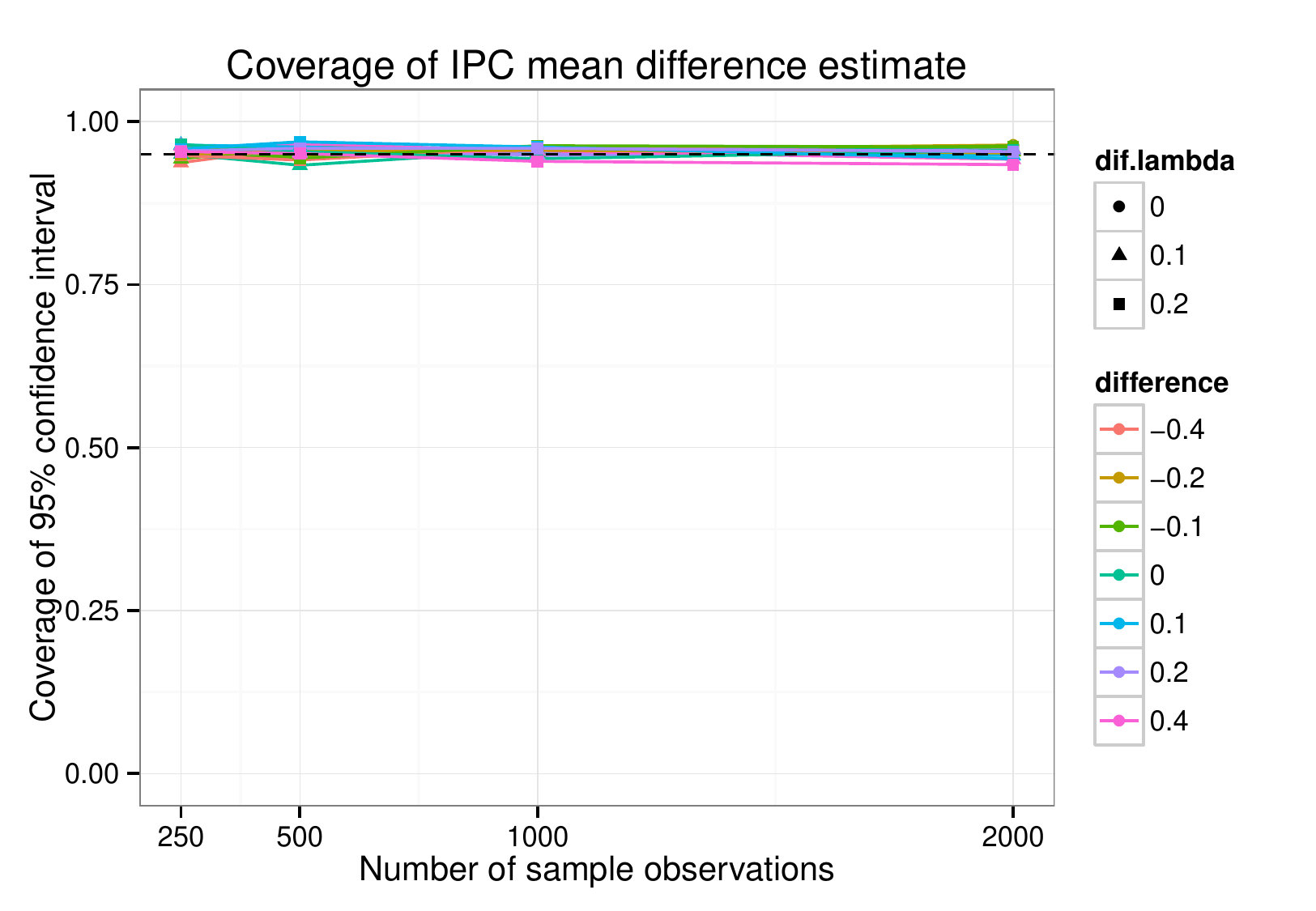}
	\caption{Coverage of 95\% confidence intervals of the mean difference between groups of latent correlation's IPC's. 
	Different conditions are shown as differences in color 
	(true latent correlation difference) and point shape (loading difference).}
		\label{fig:sim-coverage}
\end{figure}

Next, we examine the performance of the test of the hypothesis that the true between-group difference in latent correlation is zero.

Table \ref{tab:sim-alpha} shows, for both methods, the number of times the sample hypothesis test of equal latent correlations was rejected under the condition \emph{true difference} = 0: i.e., the empirical Type-I error rates using $\alpha = 0.05$. Under the null hypothesis, this proportion of rejected sample tests should be about 0.05. It can be seen that both procedures provide adequate control of type-I error under the null hypothesis. This holds true  for the smaller sample sizes as well as when other parameters do differ strongly over the groups.

\begin{table}[tb]
\centering
\caption{Empirical Type-I error rates for $\alpha = 0.05$.}\label{tab:sim-alpha}
\begin{tabular}{llrrrrrrr}
  \hline
  &&\multicolumn{3}{c}{IPC} &&\multicolumn{3}{c}{MG-SEM}\\
  \cline{3-5}\cline{7-9}
& \emph{Diff. loading} & 0 & 0.1 & 0.2 & & 0 & 0.1 & 0.2 \\ 
  \hline
\multicolumn{2}{l}{\emph{No. obs.}}\\  
&125 & 0.048 & 0.050 & 0.035	&& 0.043 & 0.050 & 0.053 \\ 
&  250 & 0.050 & 0.067 & 0.046	&& 0.035 & 0.040 & 0.045 \\ 
&  500 & 0.049 & 0.044 & 0.057	&& 0.049 & 0.059 & 0.053 \\ 
&  1000 & 0.048 & 0.043 & 0.044	&& 0.058 & 0.048 & 0.053\\ 
   \hline
\end{tabular}
\end{table}

When the \emph{true difference} $\neq$ 0, the rejection rate is the empirical power of the test to detect a difference in latent correlation. Figure \ref{fig:sim-power} relates the power of the test to sample size and the true size of the difference. The left-hand graph in Figure \ref{fig:sim-power} corresponds to the power using MG-SEM, and the right-hand graph to the power using the IPC difference test. The overall median power using multiple group structural equation modeling is 0.0095 higher than that using IPC difference testing. For conditions with a small and medium absolute differences (bottom lines), MG-SEM is somewhat more powerful, but with large differences (top lines),  IPC difference testing was found more powerful for smaller sample sizes.

\begin{figure}
\centering
	\includegraphics[width=\textwidth]{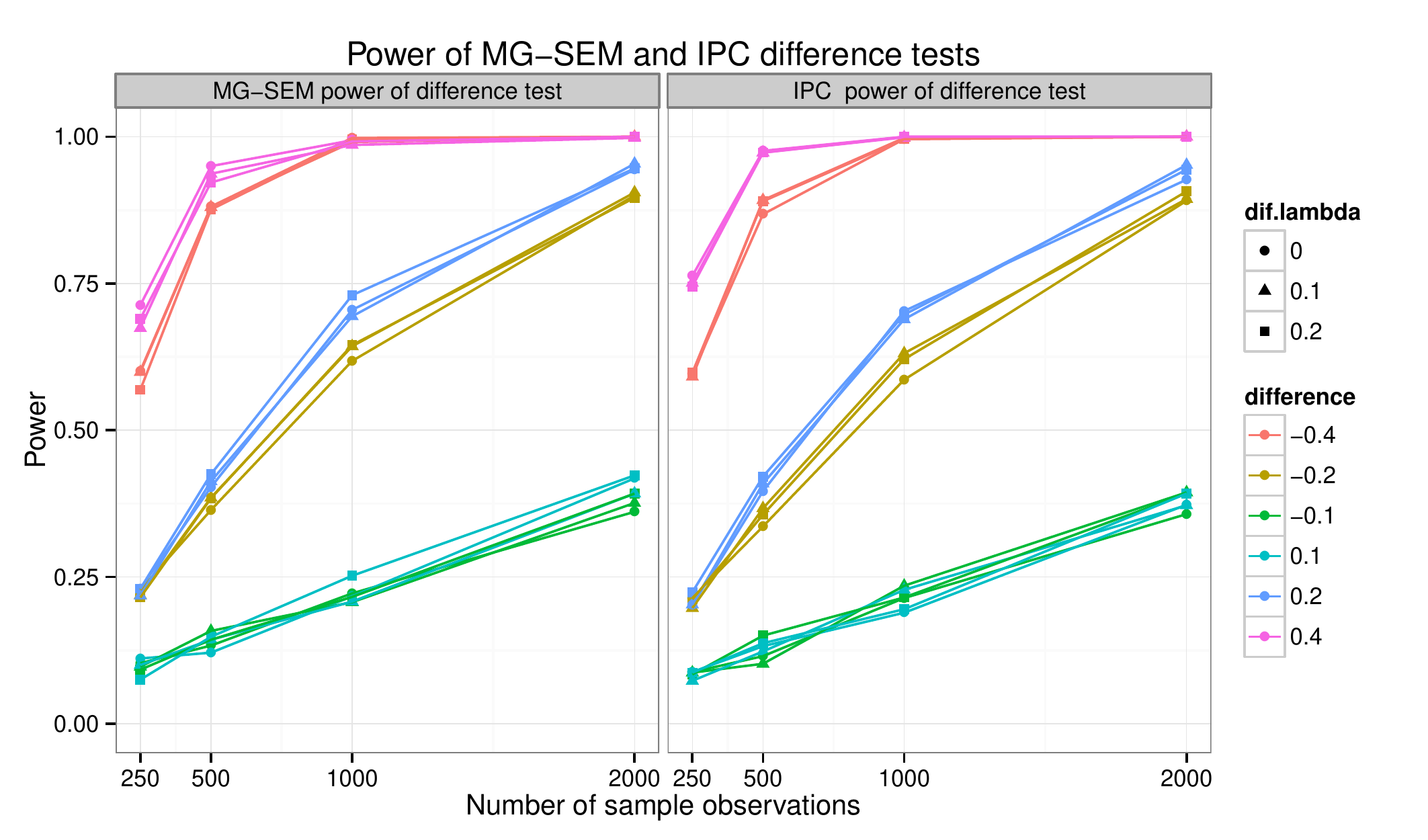}
	\caption{Power to reject mean difference test for latent correlation. Different conditions are shown as differences in color 
	(true latent correlation difference between groups) and point shape (loading non-invariance).}
	\label{fig:sim-power}
\end{figure}

\vspace{12pt}
Overall, the three-step individual parameter change difference procedure evaluated here performed almost as well as correctly specified multiple group structural equation modeling. The advantages of the IPC difference procedure in the simple setup discussed here that it is much simpler to apply, and that it may be more robust to misspecified equality restrictions than multiple group SEM.

\section{Relating measurement error variance in a survey question to respondent characteristics}\label{sec:application}

Internet usage is often studied in relationship with other variables, for instance in the literature on pathological internet use \citep[e.g.][]{gross2004adolescent} and in studies on social inequalities in internet use \citep[e.g.][]{hoffman1998bridging}.
When studying the relationship between internet use and other variables, random measurement error introduced by the self-reports may bias the results downwards \citep{lord_statistical_1968,fuller1987}. Worse, when random measurement error variance differs over groups of sex, age, and education, true differences may not only be obscured, but spurious differences may be erroneously found \citep{carroll2006measurement}. Therefore important questions are 1) how much measurement error variance the answers to internet use questions contain, and 2) whether this measurement error variance differs over social groups. We demonstrate how these questions may be answered using individual parameter contributions and contribute to the literature on survey errors.

We use 2838 observations from the LISS panel study, a simple random sample of Dutch households. For more information on the design of the LISS study we refer to \citet{scherpenzeel2011data}, and for the freely available original data and questionnaires to \url{http://lissdata.nl/}. In this longitudinal panel study design, four consecutive measurements of internet usage at work are obtained in the years 2008--2011. The log-transform of the number of hours per week the respondent claims to use the internet at work is taken to reduce skewness; Table \ref{tab:application-descriptives} shows descriptive statistics for these measures. 

\begin{table}[tb]
\centering
\caption{Descriptive statistics for four years' observations of 
	log(hours of internet use at work per week + 1); $n = 2838$. Covariates not shown.}
	\label{tab:application-descriptives}

\begin{tabular}{rrrrrrrrr}
  \hline
&  \multicolumn{4}{c}{Correlations}\\
  \cline{2-5}
 & 2008 & 2009 & 2010 & 2011 & Mean & sd & Skew & Kurtosis \\ 
  \hline
inet\_2008 & 1 & && & 0.721 & 1.029 & 1.298 & 0.600 \\ 
  inet\_2009 & 0.668 & 1 & & & 0.743 & 1.048 & 1.283 & 0.543 \\ 
  inet\_2010 & 0.644 & 0.701 & 1 &  & 0.766 & 1.073 & 1.229 & 0.315 \\ 
  inet\_2011 & 0.609 & 0.661 & 0.729 & 1 & 0.790 & 1.094 & 1.181 & 0.189 \\    
  \hline
\end{tabular}
\end{table}

\begin{figure}[tb]
\centering
	\includegraphics[width=\textwidth]{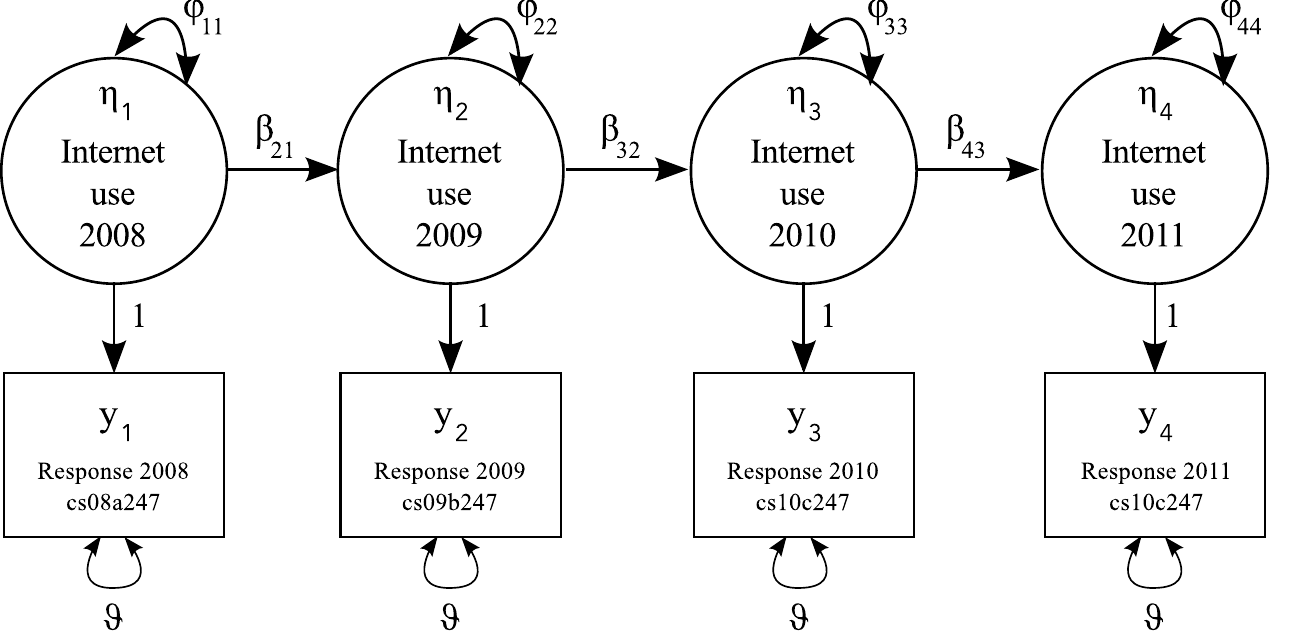}
	\caption{The quasi-simplex model for four consecutive measurement error-prone measurements of internet
	use in the LISS panel.}
		\label{fig:quasi-simplex}
\end{figure}

The first step is to formulate a structural equation model estimating measurement error variance in the answers to the internet use question. Table \ref{tab:application-descriptives} shows that the consecutive measurements do not all correlate equally: measurements at a greater distance from each other in time correlate less, so that the size of the correlations tapers off towards the lower-left corner of the matrix. This is in line with the so-called ``quasi-simplex'' model, in which internet use is not only measured with error, but the true internet use may also change over time \citep{wiley35wiley,alwin_reliability_1991,alwin_margins_2007}. Figure \ref{fig:quasi-simplex} shows the single-group quasi-simplex model formulated for these data as a structural equation model. 

\begin{table}[tb]
\begin{center}
\caption{Results of fitting the quasi-simplex model with covariates to the LISS data.
	Intercepts and fixed covariate main effects not shown.
	Note: Satorra-Bentler chi-square is 1.5 with 2 degrees of freedom ($p=0.476$, scaling factor = 2.3). CFI = 1.000, RMSEA = 0.016, $n = 2838$.	}
	\label{tab:application-estimates}
\begin{tabular}{rrrrrrrrrrrrr}
  \hline
  && \multicolumn{3}{c}{Latent regressions} & & \multicolumn{4}{c}{Variances}  \\ 
  \cline{3-5}\cline{7-10}
 & var(Error) & $\beta_{21}$ & $\beta_{32}$ & $\beta_{43}$ && F1 & F2  & F3 & F4  \\ 
  \hline
Est. & 0.28 & 0.91 & 0.94 & 0.94 & &0.58 & 0.13 & 0.11 & 0.08 \\ 
  s.e. & (0.01) & (0.03) & (0.02) & (0.02) && (0.02) & (0.02) & (0.01) & (0.02) \\ 
Stand. & 0.26 & 0.89 & 0.90 & 0.92 && 0.74 & 0.16 & 0.12 & 0.09 \\ 
   \hline
\end{tabular}

\end{center}
\end{table}

Fitting this model to the LISS data with the ``robust'' maximum likelihood estimator \citep{satorra1994corrections} yields a well-fitting model, with estimates shown in Table \ref{tab:application-estimates}. The model includes the direct effect of self-employment, age, age squared, sex, and education level on each of the four observed variables. These regression coefficient estimates are omitted from Table \ref{tab:application-estimates} for clarity.
The error variance parameter is estimated at $0.28 \pm 0.01$. This error variance estimate corresponds to reliability estimates of respectively $0.86$, $0.86$, $0.87$, and $0.88$ at the four consecutive years 2008 through 2011. The overall reliability is therefore high, but not perfect: estimates of the relationship between internet use and social background will be affected.

Another question of interest is whether the measurement error variance differs over groups. Differential measurement error can strongly affect relationship estimates \citep{carroll2006measurement}. Furthermore, survey methodologists want to know what types of respondents provide higher or lower quality data \citep[e.g.][]{alwin_reliability_1991,scherpenzeel1995question,alwin_margins_2007,revilla2012impact}. Having completed step one of fitting a single-group structural equation model, we now obtain each individual's parameter contribution to the error variance parameter estimate (step two):  a $\hat{\IPC}$ score for each respondent. Step three is then to perform a multiple linear regression of that score on respondent characteristics using standard routines. We employ the \texttt{lm} function in \texttt{R}, but linear regression routines such as those in Excel, SPSS, or Stata would serve equally well. The characteristics investigated here are whether the respondent is self-employed or not (0/1), the year of birth and its square, sex of the respondent, and education level -- primary, lower secondary (VMBO), middle secondary (MBO), upper secondary (HAVO/VWO), lower tertiary (HBO), and higher tertiary (University).

\begin{figure}[tb]
	\includegraphics[width=\textwidth]{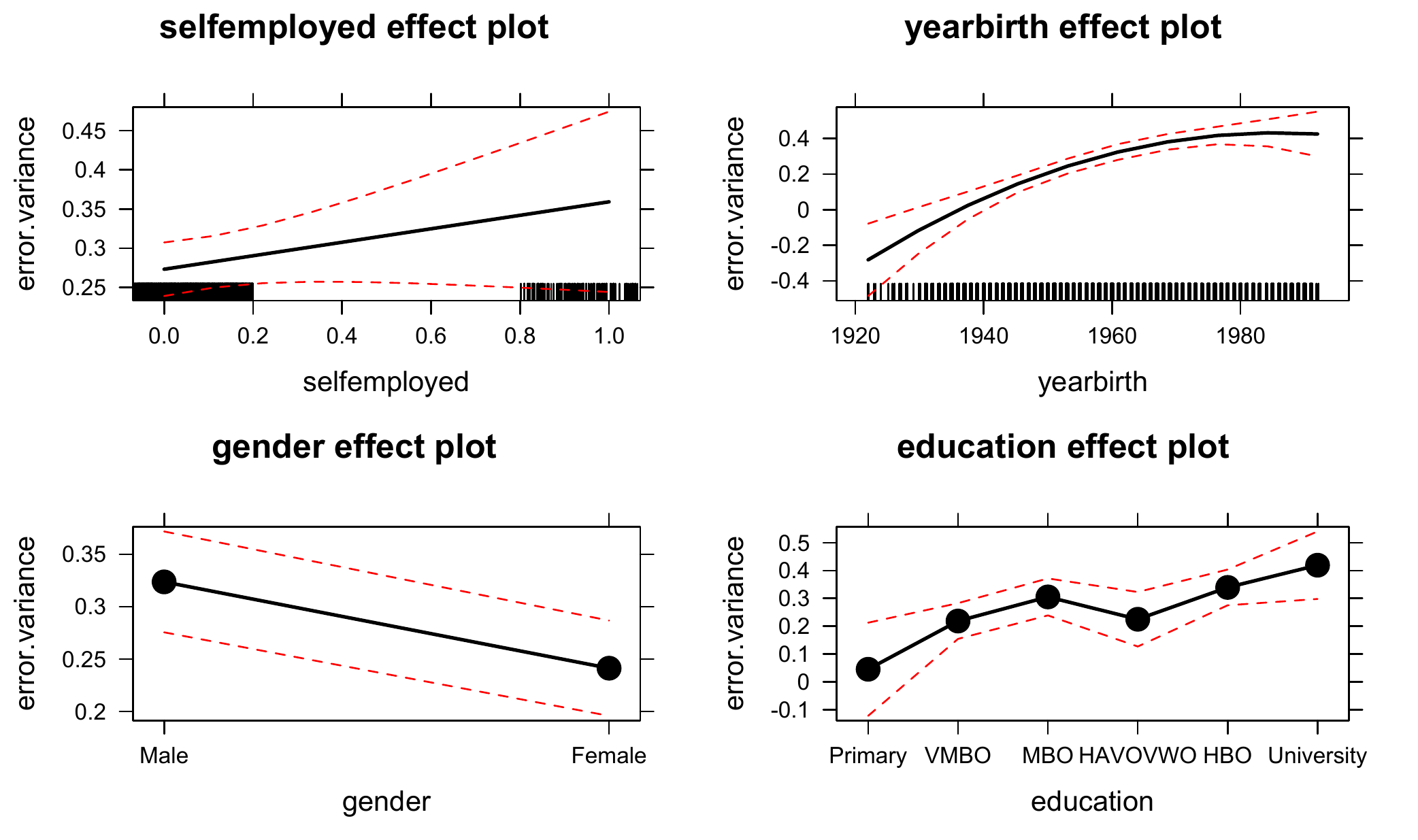}
	\caption{
	Estimated conditional differences in measurement error variance by respondent characteristics: effect plot for
  the regression of the IPC of the error variance parameter on 
	self-employed, year of birth, gender, and education.}
	\label{fig:application-effectsplot}
\end{figure}

The multiple linear regression estimates of the relationship between respondent characteristics and error variance are shown as effect plots in Figure \ref{fig:application-effectsplot}. The estimates themselves can be found in the appendix. Figure \ref{fig:application-effectsplot} shows that, controlling for the other characteristics, three of the four respondent characteristics (age, sex, and education) have a strong and statistically significant effect on the error variance, and therefore on reliability, which is inversely related to the error variance. The reliability of internet use self-reports \emph{increases} with age, \emph{decreases} with education, and is lower for men than for women. 

To survey methodologists and those familiar with the literature on reliability of survey questions, it may seem surprising that reliability would be lower for the young and those of a higher education level. After all, cognitive functioning is higher for these groups and expected to increase the reliability of answers \citep{krosnick2010experiments}. 
There may, however, be a simple explanation for this finding: the variance of measurement errors in self-reports of the number of hours spent using the internet at work may simply be related to internet use itself. If one does not use the internet \emph{at all}, presumably there would be little to no measurement error. A possible explanation for the finding that young and educated respondents appear to provide lower-quality answers is that these respondents simply tend not to be the kind of respondents who use the internet at work for zero hours per week. 

The statistically significant effects on error variance of these respondent characteristics may therefore be \emph{mediated} by whether or not the respondent \emph{never} uses the internet at work. As an indicator of this variable, we construct a new dummy variable that equals one if the respondent has answered ``zero hours'' on all four occasions and zero otherwise. We then formulate a new structural equation full mediation model in which the error variance IPC is influenced by the ``all-zeroes'' variable, and this variable, in turn is influenced by the four covariates. This full mediation model fits the data well ($\chi_{\textrm{SB}}^2 = 11$, df = 9, $p = 0.276$, RMSEA = 0.009; full results can be found in the appendix). Thus, it appears that measurement error variance indeed differs over these respondent characteristics, but only for the reason that respondents with these characteristics tend not to be those who never use the internet, something that tends to strongly reduce the error variance: the difference is $0.50 \pm 0.03$ corresponding to a fully standardized effect of ``all-zeroes'' on the error variance of 0.3.

An evaluation of the IPC regression results can be made by comparing the IPC mean differences with the results that would have been obtained by separate multiple group analyses. 
It is not possible to formulate a multiple group structural equation model in which the effect of some grouping variables on the error variance are controlled for the effect of other grouping variables. However, it is possible to repeat the IPC regression (the third step), but this time regressing the IPC on only one covariate at a time. Since there are four covariates, this yields four linear IPC regressions. This analysis can then be compared with four separate multiple group analyses in which the groups are formed by each of the four covariates (no between-group restrictions are imposed). We may then compare the differences in the error variance estimates over groups obtained from each of the four multiple group analyses with those obtained from the four simple IPC regressions. 

\begin{table}[tb]\centering
	\caption{Second column: estimates of between-group difference in error variance parameter obtained by taking a difference in IPC's; third column: difference estimated by fitting a multiple group SEM in each group and calculating the difference in the maximum-likelihood estimates.}\label{tab:application-bias}
	\begin{tabular}{llrrr}
	\hline 
& & IPC regression & Multiple group SEM & Bias\\
\hline
\multicolumn{2}{l}{Female $-$ male} & 0.0576 & 0.0627 & -0.0051\\
\multicolumn{2}{l}{Self-employed $-$ not} & 0.0919 & 0.0939 & -0.0020\\
\multicolumn{2}{l}{\emph{Education}  $-$ Primary} \\
&VMBO      & 0.1575 & 0.1638 & -0.0064\\
&MBO  & 0.3122 & 0.3196 & -0.0051\\
&HAVO/VWO    & 0.2271 & 0.2219 & -0.0051\\
&HBO & 0.3141 & 0.3225 & -0.0051\\
&WO & 0.4133 & 0.4420 & -0.0051\\
\hline 
	\end{tabular}
\end{table}

Table \ref{tab:application-bias} shows that the estimated differences in error variance between the groups obtained by either multiple group structural equation modeling or regression of the individual parameter contributions on a covariate are very close to each other. Since multiple group SEM is an asymptotically unbiased procedure, we have labeled the difference between the two sets of estimates ``bias'' (fourth column of Table \ref{tab:application-bias}). Overall, the bias is small. Thus, in this example, if a single grouping variable is of interest, the results obtained by multiple group SEM are almost identical to those obtained by simply assessing between-group differences in IPC's. Multiple group SEM does not, however, allow for multiple grouping variables whose shared variance is to be partialed out.

\section{Discussion and conclusion}

We suggested to study differences in structural equation model parameters over groups by a simplified three-step procedure. In the first step a single-group SEM is fit to the data; in step two the transformation discussed in section \ref{sec:theory} is applied to the data to obtain individual parameter contributions (IPC's); and in step three, these IPC's are related to grouping variables by standard methods such as dummy regression or $t$-tests. More generally, we showed that performing a linear regression of the IPC's on a set of covariates, categorical or continuous, amounts to calculating misspecification-robust ``expected parameter changes'' and ``modification indices'' for the heterogeneity of an arbitrary SEM parameter estimate with respect to the covariates. The procedure is simple to apply, since after step two it amounts simply to regressing a dependent variable, the IPC for a parameter, on a set of covariates explaining heterogeneity in that parameter.

A small Monte Carlo simulation evaluated the performance of this method in a particular setup, namely when interest focuses on between-group differences in latent factor correlations in a two-factor model. The IPC regression method performed excellently in this setup, showing little bias and good coverage properties. Moreover, we compared the IPC regression method in this setup to the more common, and theoretically most efficient, multiple group SEM. There were few differences between the two methods in bias, variance, type I error control, or power. In our application of the IPC regression method to between-group variations in error variances, no differences larger than 8\% between the IPC regression method and multiple group SEM were found.

Naturally, there are many possible structural equation models, parameters of interest, and heterogeneity setups; no single simulation study can evaluate the performance of this method under all of them, nor was that  our goal. Still, this represents a limitation of the present simulation study.  Our theoretical results as well as the correspondence between the empirical results in the application to the quasi-simplex model suggest that the method may show good properties in other situations as well, although a larger range of situations remains to be investigated. Exceptions may occur when the base single-group model is extremely misspecified, something that may occur in certain models (e.g. growth models) when possible mean differences with respect to the covariates are not included in the single-group model. For this reason we advise including the covariates' main effects in the single-group analysis before estimating the individual parameter contributions.

Using four consecutive waves from a longitudinal panel study of Dutch households, we estimated measurement error variance in self-reports of internet use. Individual parameter contributions to the measurement error variance varied over age, gender, and education level. Contrary to what might be expected, however, young people and the highly educated tended to provide \emph{lower} quality information. We hypothesized an explanation for this counterintuitive finding: that older people, the lower-educated, and women tend more often to simply \emph{never} use the internet, a true value that would not easily incite response errors. A mediation model in which the variable  ``answering zero hours on all four measurement occasions'' mediated the effect of the respondent characteristics on individual parameter contributions to the error variance was in line with that hypothesis. After taking into account that those who always use the internet for zero hours per week have a lower measurement error variance, there was no further effect of respondent characteristics on the error variance. 
The finding that measurement quality in a frequency question is related to the frequency itself corresponds to findings by \citet{tourangeau2000psychology} and \citet{saris_quality_2008}. This finding could not be explained by differences in respondent characteristics. Moreover, our study shows that, because of this effect, there is differential measurement error in internet use self-reports with respect to age, gender, and education level. Therefore researchers explaining or predicting internet use should be careful to allow for this differential measurement error so as to prevent bias.

Extensions to the suggested use of IPC's are trivial but require evaluation as to their efficacy. For example, clustering of IPC's may be an attractive robust method of modeling heterogeneity in parameters without allowing the clusters to be determined by heteroskedasticity and mean differences. The method may also be used to investigate measurement invariance, in which case a comparison with the test statistics suggested by  \citet{merkle2013tests} would be an interesting avenue of further study. It may be possible to model derived parameters, such as standardized coefficients and indirect or total effects in mediation analysis. Categorical data and ``extended structural equation models'' such as latent class analysis were not discussed here; both could in principle be accomodated by adopting a framework such as that of \citet{muthen2002beyond} or \citet{dutoit2012analysis}. Finally, we did not deal with missing data. Here, one avenue would be casewise gradients with respect to the FIML fitting function, and another approach might be to simply delete missings ``pairwise'' in the multiplication $\W \cdot \d$. How these approaches perform in practice, and how they perform relative to the currently best available methods, remains to be investigated.

\vspace{12pt}
In summary, individuals may differ in their parameter values. This article discussed a considerable simplification and extension of the study of arbitrary SEM parameter heterogeneity with respect to covariates. R code implementing this idea for the SEM package lavaan is given in the Appendix, as well as an example application. Using individual parameter contributions, we were able to provide a contribution to survey methodology that would have been difficult to make otherwise. We hope that other structural equation modelers will also find the suggested method a useful tool to study individual differences in SEM parameters.

%%%%%%%%%%%%%%%%%%%% REFERENCES %%%%%%%%%%%%%%%%%%%%%%%
\bibliography{/Users/daob/Dropbox/Bibliography/quality.bib}
%%%%%%%%%%%%%%%%%%%%%%%%%%%%%%%%%%%%%%%%%%%%%%%%%%%%%%%

\appendix

\section{A difference in IPC's and the corresponding test can be interpreted as an EPC and MI in a multiple group structural equation model}\label{sec:EPC}

The expected parameter change for freeing an equality-constrained parameter $\theta_g$ in group $g$ is defined as \citep{saris_detection_1987}:
\begin{equation}
	\EPC_g = \delta_g := \left( \frac{\partial^2 F}{\partial \boldsymbol{\theta}_g \partial \boldsymbol{\theta}_g'} \right)^{-1} \left( \frac{\partial F}{\theta_{g}  } \right)
\end{equation}
Note that the hessian with respect to the entire parameter vector is used, not just with respect to the parameter of interest. This takes account (to some degree) of expected changes in other, correlated, parameters due to freeing the constraint  \citep{sorbom1989model}. The EPC's can be used to obtain an estimate of the difference between groups from multiple group SEM as the difference of the EPC's $\delta_g$ for each group \citep{bentler1984efficient}:
$(\hat{\theta}+\delta_2) - (\hat{\theta}+\delta_1) = \delta_2 - \delta_1$.

\newtheorem{theorem}{Theorem}
\begin{theorem}
Formulate an MG-SEM with equality restrictions on all parameters.
Order groups by size of $\theta_g$. Define $\hat{\IPC}$ as in (\ref{eq:solution}). Then:
$\delta_2 - \delta_1 = \bar{\hat{\IPC}}_2 - \bar{\hat{\IPC}}_1$.
\end{theorem}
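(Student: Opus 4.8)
The plan is to write each side of the identity as the common linear map $\W = -(\Del'\mathbf{V}\Del)^{-1}\Del'\mathbf{V}$, evaluated at the pooled estimate $\pvechat$, applied to an explicit linear combination of the two groups' sample covariance matrices, and then to check that the two combinations agree. For the left-hand side: in the MG-SEM with all parameters equated, the discrepancy function is $F = \sum_g w_g F_g(\pvec)$ with $F_g = F(\mathbf{S}_g, \boldsymbol{\Sigma}(\pvec))$ and $w_g$ proportional to the size of group $g$, and the pooled estimate $\pvechat$ satisfies $\sum_g w_g\,\partial F_g/\partial\pvec = 0$. Freeing the group-$g$ copy $\boldsymbol{\theta}_g$ of the parameter vector makes the gradient and Hessian blocks of $F$ in $\boldsymbol{\theta}_g$, at the constrained solution, equal to $w_g\,\partial F_g/\partial\pvec$ and $w_g\,\partial^2 F_g/\partial\pvec\partial\pvec'$, so the factor $w_g$ cancels in $\delta_g = (\partial^2 F_g/\partial\pvec\partial\pvec')^{-1}(\partial F_g/\partial\pvec)$. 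The standard (Gauss--Newton) expressions quoted in Section \ref{sec:theory} give $\partial^2 F_g/\partial\pvec\partial\pvec' \approx \Del'\mathbf{V}\Del$ and, up to the same constant and the sign convention in $\EPC_g$, $\partial F_g/\partial\pvec \approx -\Del'\mathbf{V}\big(\s_g - \boldsymbol{\sigma}(\pvechat)\big)$, so that $\delta_g = -(\Del'\mathbf{V}\Del)^{-1}\Del'\mathbf{V}\big(\s_g - \boldsymbol{\sigma}(\pvechat)\big) = \W\big(\s_g - \boldsymbol{\sigma}(\pvechat)\big)$, with $\W$ precisely the transformation matrix of (\ref{eq:solution}). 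Differencing, the model-implied term cancels: $\delta_2 - \delta_1 = \W(\s_2 - \s_1)$.

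For the right-hand side: by linearity of the IPC transformation (\ref{eq:IPC-def}), $\bar{\hat{\IPC}}_g = \hat{\W}\,\bar{\d}_g$ with $\bar{\d}_g = \vech\big[n_g^{-1}\sum_{i\in g}(\y_i - \bar{\y})(\y_i - \bar{\y})'\big]$. The decomposition $\y_i - \bar{\y} = (\y_i - \bar{\y}_g) + (\bar{\y}_g - \bar{\y})$ splits $\bar{\d}_g$ into $\vech(\mathbf{S}_g)$, the group-$g$ covariance matrix fitted by the MG model, plus the rank-one term $\vech[(\bar{\y}_g - \bar{\y})(\bar{\y}_g - \bar{\y})']$. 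That extra term is identically zero when the covariates (here the grouping dummy $\z$) enter the single-group model as fixed predictors of all observed variables, because then $\d_i$ is formed from residuals and $\bar{\y}_g = \bar{\y}$; and it is asymptotically negligible in any event under the equal-means constraint of the MG model. Hence $\bar{\d}_g \approx \vech(\mathbf{S}_g) = \s_g$, so $\bar{\hat{\IPC}}_2 - \bar{\hat{\IPC}}_1 = \hat{\W}(\s_2 - \s_1)$, which coincides with $\delta_2 - \delta_1$ once $\hat{\W}$ is identified with the $\W$ used above -- the same pooled fit underlies both sides. Ordering the groups by the size of $\theta_g$ only fixes the sign so that ``group 2 minus group 1'' has a single meaning throughout.

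I expect the main obstacle to be the bookkeeping behind the left-hand side: verifying cleanly that freeing the cross-group equality constraints reduces the multiple-group score and information to the single-group quantities at $\pvechat$, and keeping track of which Hessian (the full second-derivative matrix versus the Gauss--Newton $\Del'\mathbf{V}\Del$) and which normalization of $F$ are in force, so that the proportionality constants of gradient and Hessian cancel to leave exactly $\W$ and not a multiple of it. The one further point needing an explicit, if brief, argument is the mean-structure reconciliation above -- matching $\bar{\d}_g$, which centers at the grand mean, with the MG model's $\mathbf{S}_g$, which centers at the group mean -- and this is precisely where the recommendation to include $\z$ as fixed covariates earns its keep.
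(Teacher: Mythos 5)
Your proposal is correct and takes essentially the same route as the paper's proof: both cancel the common model-implied moments so that $\delta_2-\delta_1 = \W(\s_2-\s_1)$, use the fact that the gradient-and-Hessian (information) blocks are identical across groups at the fully constrained solution, and identify the group-wise IPC means with $\hat{\W}\bar{\d}_g$. The difference is only expository: you make explicit the group weights, the Gauss--Newton versus exact Hessian distinction, and the grand-mean versus group-mean centering that reconciles $\bar{\d}_g$ with $\mathbf{S}_g$, steps the paper compresses into ``since by definition $s_g=\bar{d}_g$.''
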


\begin{proof}
The estimated difference between groups from MG-SEM is: 
$J_{2}^{-1} (\partial F / \partial \theta_2) - J_{1}^{-1} (\partial F / \partial \theta_1) = 
J^{-1} [(\partial F / \partial \theta_2) - (\partial F / \partial \theta_1)]$
because $J_{1} = J_{2}$ since all parameters estimates are equal and $\theta_1$ and $\theta_2$ play the same role in different groups:  $\Delta_{\theta_1} = \Delta_{\theta_1}$ and $V_1 = V_2$ so $J_1 = J_2 = J$.

Now, $\partial F / \partial \theta_g =  \Delta' V [s_g - \sigma(\theta_g)]$ \citep[e.g.][]{neudecker1991linear}. So 
$(\partial F / \partial \theta_2) - (\partial F / \partial \theta_1) = \Delta' V [s_2 - s_1]$, again because $\sigma(\hat{\theta}_2) = \sigma(\hat{\theta}_1)$. Since by definition $s_g = \bar{d}_g$, $\Delta' V [s_2 - s_1] = \Delta' V [\bar{d}_2 - \bar{d}_1]$, leading to 
$\delta_2 - \delta_1 = J^{-1} \Delta' V [\bar{d}_2 - \bar{d}_1] = 
%n_2^{-1} \sum_{G_2}{(W \cdot d)} - n_1^{-1} \sum_{G_1}{(W \cdot d)} = 
\bar{\hat{\IPC}}_2 - \bar{\hat{\IPC}}_1$.
\end{proof}

The ``modification index'' (score test) for freeing this restriction is defined as 
the difference divided by the variance of the difference. Some standard software packages currently use the inverse hessian $J^{-1}$ as the variance. However, from \citet[section 5]{satorra1989alternative} it is clear that this is only valid when asymptotically optimal (AO) estimation has been used. For non-AO estimation, such as pseudo-maximum likelihood or in the case of nonnormally distributed $\y$, a sandwich estimator is needed \citep{satorra1989alternative}. The resulting hypothesis test is then equivalent to the so-called ``generalized'' score test \citep{boos1992generalized}. From $\bar{\hat{\IPC}}_2 - \bar{\hat{\IPC}}_1 = J^{-1} \Delta' V [\bar{d}_2 - \bar{d}_1]$, it is clear that a standard $z$-test performed on the difference in IPC's fulfills this requirement since it will automatically calculate the sandwich variance $\var(\delta_2 - \delta_1) = J^{-1} \Delta' V \var(\bar{d}_2 - \bar{d}_1) V \Delta J^{-1}$. Thus, squared $z$-statistics from the regression of $\hat{\IPC}$ on a group indicator can be interpreted as generalized score tests that are robust to nonnormality.

Complex sampling can be taken into account as described in the text: by adjusting $d$ using the weights and estimating $\var(\bar{d}_2 - \bar{d}_1)$ incorporating clustering and stratification identifiers \citep{muthen_complex_1995}.
In general MG-SEM can be expected to yield the most accurate results. However, since some SEM software does not correct the modification indices for nonnormality or complex sampling, in cases of strong nonnormality or complex sampling design effects, the proposed procedure may therefore actually be preferable to MG-SEM.

\section{R code providing the individual parameter contributions}\label{sec:Rcode}
%\singlespacing
\small
\begin{verbatim}
# Obtain the derivative of SEM parameters w.r.t. sigma
get.g <- function(fit) {
  # d sigma / d theta
  D.free <- lavaan:::computeDelta(fit@Model)[[1]]
  colnames(D.free) <- names(coef(fit))

  # NT weight matrix
  V.nt <- lavaan:::getWLS.V(fit)[[1]]

  H <- solve(t(D.free) %*% V.nt %*% D.free)
  g <- H %*% t(D.free) %*% V.nt

  g
}

# Obtain the rescaled individual moment contributions.
get.di <- function(x, center=NULL) {
  if(is.null(center)) center <- colMeans(x, na.rm=TRUE)
  xd <- t(t(x) - center)
  
  p <- NCOL(xd); n <- nrow(xd)
  idx.vech <- vech(matrix(1:(p^2), p)) # Remove redundant elts.
  a <- matrix(rep(xd, p), ncol = p * p)[,idx.vech]
  b <- xd[, rep(1:p, each = p)][,idx.vech]
  d <- a * b * n/(n - 1)

  cbind(x, d) # add means
}

# Obtain the IPC dataset given 
#     a matrix of observations X and a lavaan fit object.
get.ipc <- function(X, fit) {
  g <- get.g(fit)
  d <- t(get.di(X))
  wi <- as.data.frame(-t(g %*% d)) # assumes meanstructure=TRUE.

  wi
}
\end{verbatim}

\section{Example application}\label{sec:Rcode-example}
\begin{verbatim}
library(lavaan)
library(systemfit) # Convenient for regression with many dep. variables
# The Holzinger and Swineford (1939) example, single-group model
HS.model <- " visual  =~ x1 + x2 + x3
              textual =~ x4 + x5 + x6
              speed   =~ x7 + x8 + x9 "
# Fit the model
fit <- lavaan(HS.model, data=HolzingerSwineford1939,
              auto.var=TRUE, auto.fix.first=TRUE, 
              auto.cov.lv.x=TRUE, 
              meanstructure=TRUE, int.ov.free=TRUE) # necessary 
# Obtain the IPC's and join them with the original data
ipc.data <- cbind(HolzingerSwineford1939, get.ipc(fit@Data@X[[1]], fit))

# Use systemfit for convenience to regress parameters on covariates.
equation.list <- lapply(names(coef(fit)), function(parname) {
  as.formula(sprintf("`%s` ~ sex+ageyr+agemo+school+grade", parname))
}) # Generates list of formulas
names(equation.list) <- names(coef(fit))

fit.system.ols <- systemfit(equation.list, data=ipc.data)
summary(fit.system.ols)

# Grade appears to be an important grouping variable 
\end{verbatim}
\clearpage
\section{Internet use study results}

Extended results for the application of IPC regression to the internet use example are given in this section.

\begin{table}[tb]
\centering
\caption{Coefficient estimates for regression of the IPC of the error variance parameter on respondent characteristics ($n = 2838$). The $R^2$ is 0.03.}
\label{tab:application-regression}
\begin{tabular}{llrrrr}
  \hline
 && Estimate & s.e. & $t$-value & $p$-value \\ 
  \hline
 & (Intercept) 			& 0.08 & (0.09) & 0.93 & 0.35 \\ 
 & Self-employed 			& 0.09 & (0.06) & 1.40 & 0.16 \\ 
 & Year birth 			& 6.71 & (0.91) & 7.34 & 0.00 \\ 
 & Year birth$^2$ 		& -2.33 & (0.91) & -2.55 & 0.01 \\ 
 & Female 				& -0.08 & (0.03) & -2.42 & 0.02 \\ 
 \multicolumn{2}{l}{\emph{Education}}\\
 & Primary school		& 0.00\\
&VMBO 			& 0.17 & (0.09) & 1.90 & 0.06 \\ 
 &MBO 			& 0.26 & (0.09) & 2.82 & 0.00 \\ 
  & HAVO/VWO 		& 0.18 & (0.10) & 1.83 & 0.07 \\ 
  & HBO 			& 0.29 & (0.09) & 3.21 & 0.00 \\ 
  & University 	& 0.37 & (0.11) & 3.54 & 0.00 \\ 
   \hline
\end{tabular}
\end{table}

\begin{table}[tb]
\centering
\caption{Full mediation SEM: error variance IPC (dependent variable), ``all-zeroes'' indicator (mediator), and covariates (predictors). Satorra-Bentler chi-square is 11 with 9 degrees of freedom ($p = 0.276$, scaling correction = 0.953); CFI = 0.999, RMSEA = 0.009, $n = 2838$.}
\label{tab:application-mediation}
\begin{tabular}{llrrr}
  \hline
	  \multicolumn{2}{l}{Parameter}  & Est. & s.e. & $z$ \\ 
  \hline
\multicolumn{2}{l}{\emph{Regression coefficients}}\\
\multicolumn{2}{l}{Error variance IPC on} \\
	& All-zeroes & -0.50 & 0.03 & -17.0 \\ \\
\multicolumn{2}{l}{All-zeroes on}\\
	&  Self-employed & -0.06 & 0.03 & -2.4 \\ 
	    & Year birth & -10.07 & 0.40 & -25.1 \\ 
    & Year birth$^2$ & 6.46 & 0.41 & 16.0 \\ 
	    & Female & 0.10 & 0.02 & 6.2 \\ 
    & Education: VMBO & -0.13 & 0.04 & -3.1 \\ 
	    & Education: MBO & -0.31 & 0.04 & -7.4 \\ 
	    & Education: HAVO/VWO & -0.35 & 0.04 & -7.8 \\ 
	    & Education: HBO & -0.46 & 0.04 & -11.3 \\ 
	    & Education: University & -0.51 & 0.04 & -11.4 \\ \\ 
\multicolumn{2}{l}{\emph{Variance parameters}}\\
&  Error variance IPC & 0.75 & 0.06 & 11.8 \\ 
&  All-zeroes & 0.17 & 0.00 & 50.1 \\ 

   \hline
\end{tabular}
\end{table}

\end{document}